\documentclass[twoside]{IEEEtran}
\usepackage{cite}
\usepackage{graphicx}
\usepackage{xcolor}
\usepackage{psfrag}
\usepackage{url}
\usepackage{amsmath}
\usepackage{array}
\usepackage{amssymb}
\usepackage{amsfonts}
\usepackage{epstopdf}
\newtheorem{proposition}{Proposition}
\newtheorem{lemma}{Lemma}

\newtheorem{corollary}{Corollary}
\newtheorem{remark}{Remark}

\newtheorem{theorem}{Theorem}

\usepackage{subfigure}

\title{Non-Adaptive Group Testing with Inhibitors}
\author{
\authorblockN{Abhinav Ganesan, Javad Ebrahimi, Sidharth Jaggi and Venkatesh Saligrama, \IEEEmembership{Senior Member, IEEE}}
\thanks{A. Ganesan, J. Ebrahimi and S. Jaggi are with the Chinese University of Hong Kong,
Hong Kong (e-mail: abhinav@inc.cuhk.edu.hk; javad.ebrahimi@gmail.com; jaggi@ie.cuhk.edu.hk).
V. Saligrama is with Boston University, Boston, USA (e-mail: srv@bu.edu).}
}

\begin{document}

\maketitle
\thispagestyle{empty}	

\begin{abstract}
Group testing with inhibitors (GTI) introduced by Farach at al. is studied in this paper. There are three types of items, $d$ defectives, $r$ inhibitors and $n-d-r$ normal items in a population of $n$ items. The presence of any inhibitor in a test can prevent the expression of a defective. For this model, we propose a probabilistic non-adaptive pooling design with a low complexity decoding algorithm. We show that the sample complexity of the number of tests required for guaranteed recovery with vanishing error probability using the proposed algorithm scales as $T=O(d \log n)$ and $T=O(\frac{r^2}{d}\log n)$ in the regimes $r=O(d)$ and $d=o(r)$ respectively. In the former regime, the number of tests meets the lower bound order while in the latter regime, the number of tests is shown to exceed the lower bound order by a $\log \frac{r}{d}$ multiplicative factor. When only upper bounds on the number of defectives $D$ and the number of inhibitors $R$ are given instead of their exact values, the sample complexity of the number of tests  using the proposed algorithm scales as $T=O(D \log n)$ and $T=O(R^2 \log n)$ in the regimes $R^2=O(D)$ and $D=o(R^2)$ respectively. In the former regime, the number of tests meets the lower bound order while in the latter regime, the number of tests exceeds the lower bound order by a $\log R$ multiplicative factor. The time complexity of the proposed decoding algorithms scale as $O(nT)$.
\end{abstract}		
\section{Introduction} \label{sec1}

Group testing, introduced by Dorfman \cite{Dor}, conventionally dealt with classifying unhealthy samples (items) called defectives and healthy ones in a huge population using a small number of tests. Classical group testing assumes binary outcome, i.e, the outcome of a test is positive if a defective item is present in a test and negative otherwise. Two categories of group testing problems are studied in the literature, namely probabilistic group testing (PGT) \cite{Dor} and combinatorial group testing (CGT) \cite{book-HD}. In the former, a probability distribution on the number of defectives is assumed while in the latter, the number of defectives is fixed or an upper bound on the number of defectives is known. This paper deals with CGT in the context of Group testing with inhibitors (GTI). The philosophy of group testing is that when the number of defective items is small, by pooling the items carefully, the items can be classified in a relatively small number of tests than when the items are tested individually.  

Two kinds of pooling designs have been vastly studied in the classical group testing literature - non-adaptive and $k$-stage adaptive pooling designs. In non-adaptive pooling designs, the pools are constructed all at once and tested parallely. This kind of pooling design is known to be economical as well as saves time in testing and they are of concern in biological applications \cite{BBTK}. A $k$-stage adaptive pooling design is comprised of pool construction and testing in $k$-stages, where the pools constructed for testing in the $k^{\text{th}}$ stage depend on the outcomes in the previous stages. A trivial two-stage adaptive pooling design was introduced by Knill \cite{Kni} in the PGT context. A trivial two-stage adaptive pooling design comprises two stages. In the first stage, a superset of the defectives is declared, and in the second stage, confirmatory tests are done. A trivial two-stage algorithm was proposed in the CGT context in \cite{DGV} and it was shown to perform asymptotically as good as the best adaptive algorithm.

GTI was introduced in \cite{FKKM} motivated by complications in blood testing \cite{PhS} and drug discovery \cite{LOGY} where blocker compounds (i.e., inhibitors) block the detection of potent compounds (i.e., defectives). GTI involves three types of items - defectives, normal items and inhibitors. A test is positive iff there is at least one defective and no inhibitors in the test. A randomized fully adaptive algorithm\footnote{A fully adaptive algorithm is where every pool constructed depends on its previous outcomes.} was proposed in \cite{FKKM} to identify upto $R$ inhibitors and upto $D$ defectives amidst $n$ items. Identifying both the defectives and the inhibitors was termed as Sample Classification Problem (SCP). Many of the later works on GTI focussed on identifying the defectives alone, termed as Defective Classification Problem (DCP). However, identifying both the defectives and the inhibitors is of interest for the following reason. For example, not all species of a given virus might be pathogenic\footnote{For instance, out of five known species of ebolavirus, only four of them are pathogenic to humans (see p. $5$ in \cite{Cli})}. The pathogenic proteins in the context of this paper are represented by the defective items and the non-pathogenic ones by the normal items. Initial stages of drug discovery attempt to find blocker or lead compounds amidst billions of chemical compounds \cite{XTSY,Cli}. These lead compounds which are referred to as the inhibitors are ultimately used to produce new drugs. Thus, the SCP intends to unify the process of finding both the pathogenic proteins and the lead compounds.

The focus of this paper is on $\epsilon$-error non-adaptive pooling design for SCP and DCP in the GTI model. Pooling designs and decoding algorithms are primarily proposed for SCP in this paper and the solution to DCP happens as a by-product.
\begin{table*}\caption{Summary of some known lower bounds and upper bounds on the number of tests using deterministic and probabilistic pooling designs. The terminology ``Deterministic'' refers to zero-error pooling designs and ``Probabilistic'' refers to $\epsilon$-error pooling designs. The second column indicates assumptions on knowledge of exact or upper bound on the number of inhibitors and defectives. The acronym UB stands for upper bound. This work primarily compares with \cite{CCF} where knowledge of the exact number of inhibitors and defectives was assumed. The lower bound for $\epsilon$-error SCP obtained in this work is different from the one for zero-error SCP in \cite{CCF} with non-adaptive pooling design.} \scriptsize
\begin{tabular}{|c|c|c|c|c|}
\hline
Reference, & No. of inhibitors ($r$)/ & Lower Bound & Upper Bound & Error/Pooling Design\\
DCP/ SCP&  Defectives ($d$) & & & \\
\hline
\cite{FKKM}, SCP & UB/ UB & $\Omega((R+D)\log n)$ &  & $\epsilon$-error, Adaptive\\
SCP & UB/ UB &  & $O((R+D)\log n)$ & Fully adaptive, Probabilistic\\
\hline
\cite{ADB}, DCP & UB/ UB &$\Omega\left(\frac{(R+D)^2}{\log (R+D)}\log n\right)$ && Zero-error, Non-adaptive\\
SCP&UB/ UB &$\Omega\left(\frac{R^2}{\log R}\log n + D \log n\right)$ && Zero-error, $k$-stage adaptive\\
DCP& UB/ Exact &  $\Omega\left(\frac{R^2}{d \log R}\log n+R\log n + d\log n\right)$, if $R \geq 2d$,& & Zero-error, $k$-stage adaptive\\&& $\Omega\left(R\log n +d\log n\right)$, if $R < 2d$. & & \\
DCP& UB/ Exact & & $O\left(\frac{R^2}{d}\log n+ d\log n\right)$, if $R \geq 2d$, & Trivial two-stage adaptive,\\&& &$O\left(R\log n +d\log n\right)$, if $R < 2d$.  &Deterministic. \\
\hline
\cite{DMTV}, DCP & UB/ UB & & $O\left((R+D)^2\log n\right)$ & Non-adaptive, Deterministic\\
\hline
\cite{DeV}, SCP & UB/ UB & & $O(R^2 \log n + D \log n )$ & $3$-Stage adaptive, Deterministic\\
\hline
\cite{DGV}, DCP & UB/ Exact & & $O\left(\frac{(R+d)^2}{d} \log n\right)$ & $4$-Stage adaptive, Deterministic\\
\hline
\cite{CCF}, DCP & Exact/ Exact &$\Omega\left(\frac{(r+d)^2}{\log (r+d)}\log n\right)$ & & Zero-error, Non-adaptive\\
 DCP & Exact/ Exact && $O\left((r+d)^2\log n\right)$& Non-adaptive, Deterministic\\
SCP& Exact/ Exact &$\max\left\{\Omega\left(\frac{(r+d)^2}{\log (r+d)}\log n\right), \Omega\left(\frac{r^3}{\log r}\log n\right)\right\}$&  & Zero-error, Non-adaptive\\
SCP& Exact/ Exact & &$O((r+d)^3 \log n)$ & Non-adaptive, Deterministic\\ 
\hline
\textbf{This work},& & $\Omega\left(d \log n\right),r=O(d)$&&$\epsilon$-error, Non-adaptive\\
SCP & Exact/ Exact & $\Omega\left(\frac{r^2}{d \log \frac{r}{d}}\log n\right), d=o(r)$ & &$\epsilon$-error, Non-adaptive\\
 & & & $O\left(d \log n\right), r=O(d)$ & Non-adaptive, Probabilistic\\
 & & & $O\left(\frac{r^2}{d} \log n\right), d=o(r)$ & Non-adaptive, Probabilistic\\
DCP &Exact/ Exact & $\Omega\left(d \log n\right),r=O(d)$&&$\epsilon$-error, Non-adaptive\\
 &  & $\Omega\left(\frac{r}{\log \frac{r}{d}}\log n\right), d=o(r)$ & &$\epsilon$-error, Non-adaptive\\
 & & & $O\left(d \log n\right), r=O(d)$ & Non-adaptive, Probabilistic\\
 & & & $O\left({r} \log n\right), d=o(r)$ & Non-adaptive, Probabilistic\\
SCP & UB/ UB & $\max\left\{\Omega\left((R+D) \log n\right),\Omega\left(\frac{R^2}{\log {R}}\log n\right)\right\}$ && $\epsilon$-error, Non-adaptive\\
 & & & $O\left(D \log n + R^2 \log n\right)$ & Non-adaptive, Probabilistic\\
DCP & UB/ UB & $\max\left\{\Omega\left(D \log n\right),\Omega\left(\frac{R}{\log {R}}\log n\right)\right\}$&&$\epsilon$-error, Non-adaptive\\
 & & & $O\left((R+D) \log n\right)$ & Non-adaptive, Probabilistic\\
\hline
\end{tabular}
\label{tab1}
\end{table*}
A summary of known adaptive and non-adaptive pooling designs for GTI and comparison with this work is given below as well as in Table \ref{tab1}\footnote{Throughout this paper, upper bounds on the number of inhibitors and defectives are denoted by $R$ and $D$, and their exact numbers are denoted by $r$ and $d$ respectively.}. 

A lower bound of $\Omega\left((R+D)\log n\right)$ tests was identified for SCP as well as DCP in \cite{FKKM}. The fully adaptive algorithm proposed in \cite{FKKM} is order optimal in the expected number of tests. Non-adaptive pooling designs in the classical group testing framework is known to be related to the design of superimposed codes \cite{DyR}. The connection between identifying the defectives amidst inhibitors and cover-free families, which is a generalization of superimposed codes, was exploited in \cite{DeV} to obtain a better lower bound on the number of tests for DCP than in \cite{FKKM} for the zero-error scenario. Lower bounds for DCP for non-adaptive pooling designs was subsequently studied in \cite{ADB} in the case where the upper bound on the number of inhibitors is given by $R$ and the upper bound on the number of defectives is given by $D$. For this case, an upper bound of $O\left((R+D)^2\log n\right)$ tests was obtained using the notion of $(R,D)$-inhibitory design introduced in \cite{DMTV}. This upper bound is away by a multiplicative factor of $\log {(R+D)}$ from the lower bound of $\Omega\left(\frac{(R+D)^2}{\log(R+D)}\log n\right)$ tests necessary for $(R,D)$-inhibitory designs as shown in \cite{ADB}. However the decoding complexity of the algorithm in \cite{DMTV} is of the order $O(n^{R}(R+D)^2 \log n)$ time units. Chang et al. proposed a non-adaptive pooling design with low decoding complexity for zero-error DCP in \cite{CCF}. For this pooling design, the scaling of the number of tests is the same as in \cite{DMTV} but the decoding complexity is much less, i.e., $O(n(r+d)^2 \log n)$ time units. A non-adaptive pooling design for zero-error SCP was also proposed in \cite{CCF} for the case where the exact number of defectives and inhibitors is known. The design requires $O((r+d)^3 \log n)$ tests and has a decoding complexity of $O(n(r+d)^3 \log n)$ time units. A lower bound on the number of tests given by $\max\left\{\Omega\left(\frac{(r+d)^2}{\log (r+d)}\log n\right), \Omega\left(\frac{r^3}{\log r}\log n\right)\right\}$ was also identified in \cite{CCF}.

It was shown in \cite{ADB} that, when the exact number of defectives is given by $d$ and the upper bound on the number of inhibitors is given by $R$, a lower bound on the number of tests for any $k$-stage adaptive pooling design for DCP with zero-error is given by 
\begin{align} \label{eqn-LB-Adap}
 &\Omega\left(\frac{R^2}{d \log R}\log n+R\log n + d\log n\right), \text{ if $R \geq 2d$},\\
 \nonumber
 &\Omega\left(R\log n+d\log n\right), \text{ if $R < 2d$}.
\end{align} Further, \cite{ADB} reported a trivial two-stage adaptive pooling design that meets the lower bound in (\ref{eqn-LB-Adap}) for $R <2d$ and exceeds the lower bound by a $\log R$ multiplicative factor for $R \geq 2d$. The above lower bound is also met closely by a four-stage adaptive pooling design proposed in \cite{DGV}. For zero-error SCP, when only the upper bounds on the number of defectives and inhibitors are known, a lower bound on the number of tests for any $k$-stage adaptive pooling design was shown to be $\Omega \left(\frac{R^2}{\log R}\log n + D \log n\right)$ \cite{ADB}. This lower bound is met upto a $\log R$ multiplicative factor by the three-stage adaptive pooling design proposed in \cite{DeV}.

In this paper\footnote{It is assumed throughout this paper that $r,d=o(n)$.}, we first propose a probabilistic non-adaptive pooling design for SCP that achieves the trivial lower bound of $\Omega(d\log n)$ tests in the regime $r=O(d)$. In the regime $d=o(r)$, the proposed algorithm is shown to exceed the lower bound by a $\log \frac{r}{d}$ multiplicative factor. Similar matches between the upper bound and lower bound in the $r=O(d)$ regime and gaps between the upper bound and lower bound in the $d=o(r)$ regime are observed for DCP too, as shown in Table \ref{tab1}. Tolerating an error probability that vanishes with the codeword length is known to offer significant gains in the rate of transmission compared to targeting zero-error estimation of messages in communication theory. A similar principle was observed with probabilistic non-adaptive pooling designs for classical group testing where the defectives can be classified in $O(d\log n)$ tests \cite{CJSA}, with an error probability that vanishes with $n$, while it requires $\Omega(\frac{d^2}{\log d}\log n)$ tests if zero-error is insisted upon \cite{book-HD}. In this paper too, a significant gain in the number of tests is observed compared to the number tests required for non-adaptive pooling design for zero-error SCP derived in \cite{CCF} where the knowledge of exact number of inhibitors and defectives was assumed. We also propose a non-adaptive pooling design for the case where knowledge of only upper bounds on the number of inhibitors and defectives is assumed. For this case, for SCP it is shown that the upper bound and the lower bound matches in the $R^2=O(D)$ regime while the upper bound exceeds the lower bound in the $D=o(R^2)$ regime by a $\log {R}$ multiplicative factor. A similar phenomenon is observed for DCP too, as observed from Table \ref{tab1}.


The main contributions and organization of this paper are summarized below.

\begin{itemize}
 \item A probabilistic non-adaptive pooling design for the GTI model with exact knowledge of number of inhibitors and defectives is proposed in Section \ref{sec4}. The number of tests required to guarantee an error probability of $cn^{-\delta}$, $c,\delta>0$, for SCP is shown to be $T=O(d\log n)$ in the $r=O(d)$ regime, and $T=O\left(\frac{r^2}{d}\log n\right)$ in the $d=o(r)$ regime (derived in Section \ref{subsec1}).
 \item An asymptotic lower bound on the number tests for SCP given by $\Omega\left(\frac{r^2}{d \log\frac{r}{d}}\log n\right)$ in the $d=o(r)$ regime for non-adaptive pooling designs is derived in  Section \ref{sec5}. Thus, the number of tests required for the proposed pooling design exceeds the lower bound by a $\log \frac{r}{d}$ multiplicative factor in the $d=o(r)$ regime.
 \item For the case where only upper bounds on the number of inhibitors and defectives is known, a modified non-adaptive pooling design is proposed and lower bound on the number of tests is derived in Section \ref{sec6}. For SCP, the upper bound is shown to match the trivial lower bound of $\Omega(D \log n)$ in the $R^2=O(D)$ regime and exceed the lower bound of $\Omega\left(\frac{R^2}{\log R} \log n\right)$ by a $\log R$ multiplicative factor in the $D=o(R^2)$ regime.
 \item The decoding complexity of the proposed algorithms are given by $O(nT)$.
\end{itemize}

In the next section, we formally introduce the GTI model.

{\em Notation:} The probability of an event $\cal E$ is denoted by $\Pr \{\cal E\}$. The notation $f(n) \approx g(n)$ represents approximation of a function $f(n)$ by $g(n)$. Mathematically, the approximation denotes that for every $\epsilon>0$, there exists $n_0$ such that for all $n>n_0$, $1-\epsilon<\frac{|f(n)|}{|g(n)|}<1+\epsilon$. The notation $\log x$ denotes logarithm to the base two and $\ln$ denotes natural logarithm.

\section{Model} \label{sec2}
The pools are chosen according to a $T \times n$ binary test matrix $M$, where $T$ denotes the number of tests and $n$ denotes the number of items. If the items are indexed from $1$ to $n$, item-$j$ participates in a test $i$ if $m_{ij}=1$, where $m_{ij}$ denotes the $i^{\text{th}}$-row, $j^{\text{th}}$-column entry of the matrix $M$. In other words, a column of the matrix denotes an item and a row denotes a test. The outcome of a test $i$, denoted by $Y_i$, is positive or equal to one iff at least one defective and no inhibitors are present in the test, and negative or zero otherwise. For example, if item-$1$ is a defective, item-$2$ is an inhibitor, and item-$3$ is a normal item, then the outcome vector corresponding to the test matrix
\begin{align*}
 M=\begin{bmatrix}
    1 & 1 & 0\\
    1& 0 &1\\
    1& 0 & 0
   \end{bmatrix}
\end{align*}is given by $\underline{Y}=[0 ~1 ~1]^T$. The goal is to identify all the $d$ defectives and $r$ inhibitors in the population using non-adaptive tests in as minimum a number of tests as possible.

This work is inspired by the noisy-CoMa algorithm \cite{CJSA} which is briefly reviewed in the next section.

\section{Review of Noisy CoMa Algorithm \cite{CJSA}} \label{sec3}
The pooling design and decoding algorithm proposed in this paper is inspired by the noisy column matching (CoMa) algorithm proposed in \cite{CJSA} for the classical group testing framework. We briefly summarize the CoMa algorithm and its noisy version in this section. An instance of classical group testing in the noiseless case and noisy case is given below. The first two columns of $M$ correspond to defective items and the last two columns correspond to normal items.
\begin{align}
\label{eqn-eg_CoMa1}
M=\begin{bmatrix}
\color{blue}{0}&\color{blue}{1} & 0 & 0\\
\color{blue}{1}&\color{blue}{0} & 1 & 0 \\
\color{blue}{0}&\color{blue}{0} & 1 & 0 \\
\color{blue}{1}&\color{blue}{1} & 0 & 1
\end{bmatrix}&\underset{\text{Noiseless}}{\Rightarrow}
\underline{Y}=\begin{bmatrix}
1\\
1\\
0\\
1
\end{bmatrix}\\
\label{eqn-eg_CoMa2}
&\underset{\text{~~Noisy}}\Rightarrow~
\underline{Y}=\begin{bmatrix}
1\\
1\\
0\\
\color{red}{0}
\end{bmatrix}.
\end{align}
%

In the noisy case, the noise flips an outcome with a probability $q$ and the noise is i.i.d. across tests. In the noiseless case, the CoMa algorithm ``matches'' the column of each item to the outcome. If an item participates only in positive outcome tests then, it is declared to be a defective and otherwise, it is declared to be a normal item. Clearly, a defective item participates only in positive outcome tests. But a normal item in any test could be masked by a defective item. Hence, when all the tests in which a normal item participates are masked by at least one defective, the normal item is erroneously declared as a defective. The example in (\ref{eqn-eg_CoMa1}) illustrates this. The normal item-$4$ is masked by item-$1$ in the fourth test and hence, is erroneously declared as a defective. However, item-$3$ is correctly declared to be a normal item because it participates in a negative outcome test, i.e., the third test. It is shown in \cite{CJSA} that when the entries of the test matrix $M$ is chosen according to i.i.d. ${\cal B}(p)$\footnote{Such a pooling design is referred to as an i.i.d. pooling design with parameter $p$.}, where ${\cal B}(p)$ represents the Bernoulli distribution with parameter $p=\frac{1}{d}$, the probability of a normal item to be declared as a defective item vanishes with $n$, given sufficient number of tests, i.e., $O(d \log n)$ tests.

In the noisy case, even the defective items could participate in negative outcome tests on account of noise flipping the outcomes. So, the noisy CoMa algorithm proposes a threshold based algorithm to differentiate between the defectives and the normal items. In the example given in (\ref{eqn-eg_CoMa2}), the defectives can be correctly identified if the criterion for declaring the defectives is relaxed. For example, the criterion for the example in (\ref{eqn-eg_CoMa2}) could be that an item is declared to be a defective if it participates in not more than a single negative outcome test. With this criterion, both the defectives are correctly identified. This idea was generalized by the noisy CoMa algorithm as follows.

Denote the set of tests in which an item-$j$ participates by ${\cal T}_j$ and the set of positive outcome tests in which an item-$j$ participates by ${\cal S}_j$, for $j=1,2,\cdots,n$. Mathematically, ${\cal T}_j$ and ${\cal S}_j$ are defined as follows.
\begin{align} \label{eqn-sets}
& {\cal T}_j \triangleq \{i|m_{ij}=1\},\\
\nonumber
& {\cal S}_j \triangleq \{i|m_{ij}=1 \text{ and } Y_i=1\}.
\end{align}An item-$j$ is declared to be a defective if $|{\cal S}_j| > |{\cal T}_j|[1-q(1+\tau))]$, for some $\tau>0$, and a normal item, otherwise, where $q$ is the probability that an outcome is flipped. It was shown that with an i.i.d. pooling design with $p=\frac{1}{d}$, appropriate choice of $\tau$ and a sufficient number of tests, all the items can be classified with $\epsilon$-error probability.

In the next section, we demonstrate an adaptation of the noisy CoMa algorithm for SCP in the GTI model.

\section{Pooling design and Decoding Algorithm} \label{sec4}
In this section, exact knowledge of the number of inhibitors and defectives given by $r$ and $d$ respectively is assumed. As summarized in the Section \ref{sec1}, the inhibitor model was motivated by ``noisy'' measurements due to presence of inhibitory compounds in blood testing. Hence, it is natural to consider an adaptation of the noisy CoMa algorithm for SCP where, from the view point of the defectives, the inhibitors behave as noise. 
In the noisy classical group testing, the noise is assumed to be independent across tests. A similar effect can be accomplished when an i.i.d. pooling design with parameter $p$ is used. In this section, the ``noise'' parameter $q$ denotes the probability of presence of at least one inhibitor in a test, which is given by $1-(1-p)^r$. Hence, with an i.i.d. pooling design, as far as the defectives are concerned, its positive outcomes are flipped with probability $q$. However, unlike in the noisy classical group testing, the ``noise'' is asymmetric here. To see this, consider a test where a normal item and no defective participates. Here, the outcome is never flipped unlike in the noisy classical group testing framework.

For an i.i.d. pooling design with parameter $p$ which shall be specified later, the decoding algorithm for declaring the inhibitors, normal items and the defectives is specified below as well as represented in Fig. \ref{fig-threshold}.

\begin{enumerate}
 \item If $|{\cal S}_j|=0$, declare item-$j$ to be an inhibitor.
 \item If $1 \leq |{\cal S}_j| \leq \lfloor|{\cal T}_j|[1-q(1+\tau))]\rfloor$, declare item-$j$ to be a normal item.
 \item If $|{\cal S}_j| > |{\cal T}_j|[1-q(1+\tau))]$, declare item-$j$ to be a defective.
\end{enumerate} 
\begin{figure}[htbp] 
\vspace{-3cm} 
\includegraphics[totalheight=11cm,width=10cm]{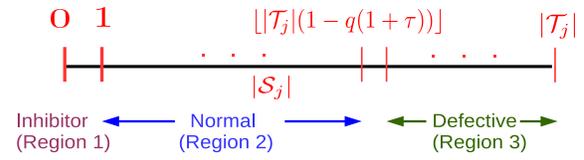}
 \vspace{-5cm} 
\caption{Thresholds for item classification.}
\label{fig-threshold}
\end{figure}
Clearly, all the inhibitors participate only in negative outcome tests (i.e., the inhibitors never fall under region-$2$ or region-$3$). And, one expects a defective to participate in a large fraction of positive outcome tests and a normal item to participate in a relatively few positive outcome tests. This statistical difference is exploited to differentiate the defectives from the normal items. The average number of negative outcome tests in which a defective item-$j$ participates is given by $ |{\cal T}_j| q$. So, a defective item-$j$ is expected to participate in roughly $|{\cal T}_j|(1-q)$ positive outcome tests. The slack parameter $\tau$ ensures that all the defectives are identified with high probability. However, the value of $\tau$ must not be too high because too many normal items would be wrongly identified as defectives. A careful choice of $\tau$ to ensure $\epsilon$-error probability, where $\epsilon=cn^{-\delta}$, is specified in the following sub-section. 

We note that the decoding rule for identifying the defectives is similar to that in the noisy CoMa algorithm. The difference is that the probability $q$ is now dependent on the pooling design parameter. Now, it is not clear what the choice of the parameter must be so that the number of tests required is close to the lower bound. Non-trivial lower bounds for $\epsilon$-error non-adaptive schemes are also not known. In the following sub-section, using a suitable choice of $p$, we show that the number of tests required for guaranteed classification of items with high probability scales as $T=O(d\log n)$ and $T=O(\frac{r^2}{d}\log n)$ in the $r=O(d)$ and $d=o(r)$ regimes respectively. And, in the following section, we obtain a non-trivial lower bound for $\epsilon$-error non-adaptive pooling design.

\subsection{Error Analysis of the Proposed Algorithm} \label{subsec1}

The error analysis of the proposed algorithm follows similar techniques used in the analysis of CoMa algorithm in \cite{CJSA}. In the proposed algorithm, there are three possible error events for a given item that participates in at least one of the tests. 

\begin{enumerate}
\item  A defective might be wrongly identified as a normal item or an inhibitor, i.e., the defective falls under either region-$1$ or region-$2$ in  Fig. \ref{fig-threshold}. 
\item A normal item might be wrongly identified as a defective, i.e., the normal item falls under region-$3$ in  Fig. \ref{fig-threshold}.
\item A normal item might be wrongly identified as an inhibitor, i.e., the normal item falls under region-$1$ in  Fig. \ref{fig-threshold}.
\end{enumerate}
The other error events include non-participation of items in any of the tests. Clearly, an inhibitor cannot be wrongly identified as a normal item or a defective in the proposed decoding algorithm. Now, denote the set of defectives, normal items, and inhibitors by ${\cal D}, {\cal N}$, and ${\cal I}$ respectively. Define the following error events which are related to the error events described above.
\begin{align*}
\nonumber
&{\cal E}^{(j)}_1 \triangleq \text{Item-$j$, $j\in {\cal D}$, does not appear in any of the tests or}\\
&~\hspace{1cm}\text{it is declared as a normal item or an inhibitor.}\\
\nonumber
&{\cal E}^{(j)}_2 \triangleq \text{Item-$j$, $j\in {\cal N}$, does not appear in any of the tests or}\\
&~\hspace{1cm}\text{it is declared as a defective.}\\
&{\cal E}^{(j)}_3 \triangleq \text{Item-$j$, $j\in {\cal N}$, does not appear in any of the tests or}\\
&~\hspace{1cm}\text{it is declared as an inhibitor.}\\ 
\nonumber
&{\cal E}^{(j)}_4 \triangleq \text{Item-$j$, $j\in {\cal I}$, does not appear in any of the tests.}
\end{align*}Clearly, the total error probability is equal to the probability of union of the error events defined above. We now evaluate the probability of each of these error events and show that, when $r=O(d)$ and $d=o(r)$, $T=O(d\log n)$ and $T=O(\frac{r^2}{d}\log n)$ tests are sufficient respectively to guarantee that the total error probability decays as $c n^{-\delta}$ for some positive constant $\delta$ and $c=4$ (corresponding to the union bound of the four error events define above).

Let $T=\beta \log n$. Now, we have
\begin{align}
\nonumber
&\text{Pr}\left\{\underset{j\in {\cal D}}{\bigcup} {\cal E}^{(j)}_1\right\} \leq d\text{ Pr} \left\{ |{\cal S}_j| \leq |{\cal T}_j|[1-q(1+\tau))]|j \in {\cal D}\right\}\\
\nonumber
&= d \sum_{t=0}^{T} {T \choose t} p^t (1-p)^{T-t} \sum_{v=tq(1+\tau)}^{t} {t \choose v} q^v (1-q)^{t-v}\\
\nonumber
&\overset{(a)}{\leq} d \sum_{t=0}^{T} {T \choose t} p^t e^{-2t{(q\tau)}^2} (1-p)^{T-t}\\
\nonumber
&\overset{(b)}{=} d \left[1-p+pe^{-2 (q\tau)^2}\right]^{\beta \log n}\\
\nonumber
&\overset{(c)}{\leq} d\exp\left\{-\beta p \log n \left(1-e^{-2 (q\tau)^2}\right)\right\} \leq n^{-\delta}\\
\nonumber
&\overset{(d)}{\Rightarrow}  d\exp\left\{-\beta p \log n \left(1-e^{-2}\right)(q\tau)^2\right\} \leq n^{-\delta}\\
\label{eqn-beta1}
&\Rightarrow \beta \geq \frac{\left(\frac{\ln d}{\ln n}+\delta\right)\ln 2}{p(1-e^{-2})(q\tau)^2},
\end{align}where $(a)$ follows from Chernoff-Hoeffding bound \cite{Hof}, $(b)$ follows from binomial expansion, $(c)$ follows from the fact that $1-c \leq e^{-c}$, and $(d)$ suffices to ensure the upper bound of $n^{-\delta}$ since $\left(1-e^{-2 (q\tau)^2}\right) \geq \left(1-e^{-2}\right)(q\tau)^2$, for $0 < q\tau < 1$.

Let $a$ denote the probability that a test outcome is positive given that a normal item-$j$ is present in the test. This is given by
\begin{align}
\label{eqn-define-a}
a=(1-p)^r[1-(1-p)^d]. 
\end{align}
We now have

\begin{align*}
&\text{Pr}\left\{\underset{j\in {\cal N}}{\bigcup} {\cal E}^{(j)}_2\right\}\\
&\leq (n-d-r)\text{ Pr} \left\{ |{\cal S}_j| > |{\cal T}_j|[1-q(1+\tau))]|j \in {\cal N}\right\}\\
&= (n-d-r) \sum_{t=0}^{T} {T \choose t} p^t (1-p)^{T-t} \\\nonumber &\hspace{3cm} \times\sum_{v=t\left(1-q(1+\tau)\right)}^{t} {t \choose v} a^v (1-a)^{t-v}\\
&{=} (n-d-r) \sum_{t=0}^{T} {T \choose t} p^t (1-p)^{T-t} \\\nonumber &\hspace{2.5cm} \times \sum_{v=at+t\left(1-q(1+\tau)-a\right)}^{t} {t \choose v} a^v (1-a)^{t-v}.
\end{align*}Now, following similar steps as in obtaining (\ref{eqn-beta1}) and assuming that $0<(1-q(1+\tau)-a)<1$\footnote{This condition is necessary for application of Chernoff-Hoeffding bound.}, we have
\begin{align}
\label{eqn-beta2}
&\beta \geq \frac{\left(\frac{\ln n-d-r}{\ln n}+\delta\right)\ln 2}{p(1-e^{-2})\left(1-q(1+\tau)-a\right)^2} .
\end{align}

The constraint on $\beta$ corresponding to the third error event is obtained as follows.
\begin{align}
\nonumber 
 &\text{Pr}\left\{\underset{j\in {\cal N}}{\bigcup} {\cal E}^{(j)}_3\right\}\leq (n-d-r)\text{ Pr} \left\{ |{\cal S}_j| = 0|j \in {\cal N}\right\}\\
\nonumber
&= (n-d-r) \sum_{t=0}^{T} {T \choose t} p^t (1-p)^{T-t} (1-a)^t\\
\nonumber
&= (n-d-r)(1-ap)^T \\\nonumber & \leq (n-d-r)\exp\{-ap\beta \log n\} \leq n^{-\delta}\\
\label{eqn-beta3}
&\Rightarrow \beta \geq \frac{\left(\frac{\ln n-d-r}{\ln n}+\delta\right)\ln 2}{p(1-p)^r\left(1-(1-p)^d\right)}.
\end{align}

Now, from the fourth error event we obtain
\begin{align}
 \nonumber
  &\text{Pr}\left\{\underset{j\in {\cal I}}{\bigcup} {\cal E}^{(j)}_4\right\} \leq r (1-p)^T \leq n^{-\delta}\\
 \label{eqn-beta4}
 \Rightarrow &~ \beta \geq  \frac{\left(\frac{\ln r}{\ln n}+\delta\right)\ln 2}{p}.
\end{align}

Note that the term $\frac{1}{q\tau}$, which appears in (\ref{eqn-beta1}), is a decreasing function of $\tau$ whereas $\frac{1}{\left(1-q(1+\tau)-a\right)}$, which appears in (\ref{eqn-beta2}), is an increasing function of $\tau$. Therefore, the minimum of the two quantities is maximized when 
\begin{align}
\label{eqn-tau}
 \frac{1}{q\tau} = \frac{1}{\left(1-q(1+\tau)-a\right)} \Rightarrow \tau = \frac{1-q-a}{2q}.
\end{align}Note that the above value of $\tau$ satisfies the conditions $0<q\tau<1$ and $0<(1-q(1+\tau)-a)<1$ which were assumed in deriving (\ref{eqn-beta1}) and (\ref{eqn-beta2}) respectively. Substituting this value of $\tau$, we have

{\small\begin{align*}
\frac{1}{(q\tau)^2} = \frac{1}{\left(1-q(1+\tau)-a\right)^2} = \frac{4}{(1-q-a)^2} =\frac{4}{(1-p)^{2(r+d)}}.
\end{align*}}Therefore, from (\ref{eqn-beta1})-(\ref{eqn-beta4}) we have

{\small\begin{align}\nonumber
\beta \geq \max & \left \{\frac{4\left(\frac{\ln d}{\ln n}+\delta\right)\ln 2}{p(1-p)^{2(r+d)}(1-e^{-2})}, \frac{4\left(\frac{\ln n-d-r}{\ln n}+\delta\right)\ln 2}{p(1-p)^{2(r+d)}(1-e^{-2})},\right.\\ \label{eqn-no_of_tests_p} & ~~\left.\frac{\left(\frac{\ln n-d-r}{\ln n}+\delta\right)\ln 2}{p(1-p)^r\left(1-(1-p)^d\right)}, \frac{\left(\frac{\ln r}{\ln n}+\delta\right)\ln 2}{p}\right \}.
\end{align}}Since $1-p \leq e^{-p}$, if $p$ does not scale inversely with respect to $r+d$, the first two terms above would scale exponentially with $r$. Optimizing the denominators of the first two terms above, we have 
\begin{align*}
p=\frac{1}{2(r+d)+1}. 
\end{align*}At large values of $r+d$, using the approximations $p \approx \frac{1}{2(r+d)}$ and $1-p \approx e^{\frac{-1}{2(r+d)}}$, we obtain (\ref{eqn-no-of-tests}) (given at the top of the next page).

\begin{figure*}
\begin{align} \label{eqn-no-of-tests}
\beta \geq \max & \left \{\frac{8e(r+d)\left(\frac{\ln d}{\ln n}+\delta\right)\ln 2}{(1-e^{-2})}, \frac{8e(r+d)\left(\frac{\ln n-d-r}{\ln n}+\delta\right)\ln 2}{(1-e^{-2})},\frac{2(r+d)\left(\frac{\ln n-d-r}{\ln n}+\delta\right)\ln 2}{e^{\frac{-r}{2(r+d)}}\left(1-e^{\frac{-d}{2(r+d)}}\right)}, {2(r+d)\left(\frac{\ln r}{\ln n}+\delta\right)\ln 2}\right \}
\end{align}\hrule
\end{figure*}

The scaling of the number of tests in the proposed algorithm in the two possible scenarios $r=O(d)$ and $d=o(r)$ is evaluated as given below.

\begin{enumerate}
 \item For $r=O(d)$, i.e., $r \leq cd$ for some constant $c>0$ and for all $n>n_0$, the third term in (\ref{eqn-no-of-tests}) can be upper bounded by
 
 \begin{align*}
\frac{2(r+d)\left(\frac{\ln n-d-r}{\ln n}+\delta\right)\ln 2}{e^{\frac{-c}{2(c+1)}}\left(1-e^{\frac{-1}{2(c+1)}}\right)}.
 \end{align*}Since the rest of the terms in (\ref{eqn-no-of-tests}) scale as $r+d$ and recalling that $T=\beta \log n$, it is easily seen that the number of tests scales as $T=O\left(d\log n\right)$.
 
 \item For $d=o(r)$, using the approximation $1-e^{\frac{-d}{2(r+d)}} \approx \frac{d}{2r}$, the third term in (\ref{eqn-no-of-tests}) is approximated as
  \begin{align*}
  \frac{4e~r(r+d)\left(\frac{\ln n-d-r}{\ln n}+\delta\right)\ln 2}{d}.
 \end{align*}Hence, the number of tests scales as $T=O\left(\frac{r^2}{d}\log n\right)$ in the $d=o(r)$ regime.
\end{enumerate} 
 \begin{remark} \label{rem1}
 Note that the proposed algorithm identifies all the types of items. However, if the objective is to identify only the defectives, it can be done in $T=O\left((r+d)\log n\right)$ tests using the proposed algorithm. This is because only the first and the second terms in (\ref{eqn-no-of-tests}) matter as the third and the fourth terms correspond to the events of wrongly identifying normal items as inhibitors and inhibitors not appearing in any of the tests, respectively. Also, note that the time complexity of the proposed algorithm is given by $O(nT)$.
 \end{remark}
 
\begin{remark}
Using (\ref{eqn-tau}), we observe that the threshold for differentiating the defectives from the normal items is given by
\begin{align*}
 {\cal T}_j(1-q(1+\tau)) = {\cal T}_j\frac{b+a}{2},
\end{align*}where $b=1-q$ represents the probability of a positive outcome given that a defective item-$j$ is present in the test\footnote{The value of $\tau$ could also be chosen by optimizing the full expressions in (\ref{eqn-beta1}) and (\ref{eqn-beta2}) instead of optimizing only their denominators. However, since we are interested only in order optimality, optimizing only their denominators suffices. Moreover, the resulting value of $\tau$ helps formulate a thumb rule for i.i.d. pooling design. }. Note that the chosen threshold can be seen as the average of the statistics of the two different types of items. In general, if the threshold for differentiating two different types of items is fixed in this manner, where $b>a$, the number of tests required to guarantee $\epsilon$-error probability scales inversely as
\begin{align*}
 T \propto \frac{1}{p(b-a)^2}.
\end{align*}We note that the first two terms in (\ref{eqn-no_of_tests_p}) take this form. This is expected because each item appears in a test with probability $p$ and the number of tests must be inversely proportional to the statistical difference between the two different types of items. This observation might serve as a useful thumb rule in predicting the scaling of the number of tests required for i.i.d. pooling designs. For example, this thumb rule could be useful when i.i.d. pooling designs are used for stochastic threshold group testing which exploits statistical difference between a defective and a normal item in classifying the items, when the number of defectives in a pool is known to be exactly equal to the lower threshold \cite{CCBJS}.
\end{remark}

Now, clearly, the proposed pooling design requires order optimal number of tests when $r=O(d)$. But, for $d=o(r)$, the lower bound is unknown in the $\epsilon$-error case. We now present a lower bound tighter than $\Omega(r\log n)$ in the $d=o(r)$ regime for non-adaptive pooling designs in the next section. 

\section{Lower Bound for Non-Adaptive Pooling Design} \label{sec5}
In this section, we show that the  number of tests required in the proposed non-adaptive scheme exceeds the derived lower bound by a $\log \frac{r}{d}$ multiplicative factor. The number of choices for the defectives and inhibitors is given by ${n \choose d}{n-d \choose r}$. A lower bound on the number of tests is now given by 
\begin{align} \label{eqn-LB}
T \geq  \frac{\log {n \choose d}{n-d \choose r}(1-P_e)-H_2(P_e)}{\underset{g}{\max}~ H(Y)},
\end{align}where $g$ represents the size of a pool, $H(Y)$ represents the outcome entropy, $P_e$ denotes the average error probability\footnote{The average error probability is defined as $\Pr\{\hat{X} \neq X\}$, where $X$ denotes the actual classification of the $n$ items and $\hat{X}$ denotes the estimated classification of the $n$ items. The probability is averaged over the randomness in the actual classification of the items.}, and $H_2(P_e)$ denotes the binary entropy. Standard information theoretic arguments along with Fano's inequality are used to obtain (\ref{eqn-LB}). The details can be found in \cite{CJSA} in the classical group testing framework. The average error probability $P_e$ is assumed to vanish to zero with increasing $n$. It is also assumed that the number of inhibitors $r$ grows with the number of items $n$ and $r,d=o(n)$. 

Now, let $p_Y$ represent the probability of a positive outcome. If $p_Y \leq \frac{1}{2}$ then, maximizing the outcome entropy is equivalent to maximizing $p_Y$. In the regime $d=o(r)$, for all sufficiently large $n \geq n_0$, we show that the optimum pool size that maximizes $p_Y$ is given by
$$
g_{opt}\in \left\{
\begin{array}{ll}
\left[\lfloor\frac{n}{r}\rfloor,\lceil\frac{n}{r}\rceil\right]_{\mathbb Z},& ~\text{if $\frac{n}{r}$ is not an integer} \\
\left[\frac{n}{r}-1,\frac{n}{r}+1\right]_{\mathbb Z}, & ~\text{otherwise,}
\end{array}
\right.
$$where $[a,b]_{\mathbb Z}$ represents integers between $a$ and $b$ including the end points. In other words, the optimum pool size can be expressed as $g_{opt}=\frac{n}{r} + \alpha$, where $-1\leq \alpha \leq 1$. We show that the maximum value of $p_Y$ is approximately given by $p_{Y_{max}} \approx \frac{d}{r}$ which is less than half for large $n$. Hence, the entropy $H(Y)$ is also maximized at $g_{opt}=\frac{n}{r} + \alpha$, for large $n$.

The probability of positive outcome for a pool size of $g \leq n-d-r$ is given by 
\begin{align*}
 p_Y(g)=\frac{{n-r \choose g}-{n-d-r \choose g}}{{n \choose g}}.
\end{align*}We now prove that $p_Y(g)$ is an increasing function for $g < g_0$ and a decreasing function for $g > g_1$. This implies that the optimum value of $g$, subject to the constraint that $g \leq n-d-r$, lies between $g_0$ and $g_1$. In fact, it is shown that the real numbers $g_0$ and $g_1$ converge to $\frac{n}{r}$ with increasing $n$. It is shown later that the global optimum pool size falls in the interval $g \leq n-d-r$, for all sufficiently large $n$. Towards that end, the probability $p_Y(g)$, for $g \leq n-d-r$, is re-written as in (\ref{eqn-p_y}) (given at the top of the next page).

Unless mentioned otherwise, hereafter, we assume that $g \leq n-d-r$.

\begin{figure*}
\begin{align}\nonumber
p_Y(g) & = \frac{(n-r)(n-r-1)\cdots (n-r-g+1)}{n (n-1) \cdots (n-g+1)} - \frac{(n-d-r)(n-d-r-1)\cdots (n-d-r-g+1)}{n (n-1) \cdots (n-g+1)}\\
\label{eqn-p_y}
& = \prod_{i=0}^{g-1} \left(1-\frac{r}{n-i}\right) - \prod_{i=0}^{g-1} \left(1-\frac{r+d}{n-i}\right)
\end{align}
\hrule
\end{figure*}

\begin{lemma} \label{lem_dec_g1}
The probability $p_Y(g)$ is a decreasing function of $g$ for $g > g_1$, where
 $g_1=\frac{\ln \left(1+\frac{d}{r}\right)}{ \ln \left(1+\frac{d}{n-d-r}\right)}$.
\end{lemma}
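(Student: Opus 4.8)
The plan is to work with the forward difference $p_Y(g+1)-p_Y(g)$ and to show it is strictly negative once $g>g_1$. Writing $A(g)=\prod_{i=0}^{g-1}\left(1-\frac{r}{n-i}\right)$ and $B(g)=\prod_{i=0}^{g-1}\left(1-\frac{r+d}{n-i}\right)$ so that $p_Y(g)=A(g)-B(g)$ by (\ref{eqn-p_y}), I would first peel off the last factor of each product. Since $A(g+1)=A(g)\left(1-\frac{r}{n-g}\right)$ and $B(g+1)=B(g)\left(1-\frac{r+d}{n-g}\right)$, a direct computation gives
\begin{align*}
p_Y(g+1)-p_Y(g)=\frac{1}{n-g}\Big[(r+d)B(g)-rA(g)\Big].
\end{align*}
Because we restrict to $g\le n-d-r$, the prefactor $\frac{1}{n-g}$ is strictly positive, so the sign of the difference is governed entirely by the bracketed term. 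Hence $p_Y$ decreases at $g$ precisely when $\frac{B(g)}{A(g)}<\frac{r}{r+d}$.

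The next step is to control the ratio $\frac{B(g)}{A(g)}=\prod_{i=0}^{g-1}\frac{n-i-r-d}{n-i-r}=\prod_{i=0}^{g-1}\left(1-\frac{d}{n-i-r}\right)$. Each factor is positive on the range $g\le n-d-r$, and since $n-i-r$ decreases as $i$ grows the factors are largest at $i=0$. Replacing every factor by this maximal one yields the clean upper bound $\frac{B(g)}{A(g)}\le\left(1-\frac{d}{n-r}\right)^{g}$, so it suffices to locate where this bound drops below $\frac{r}{r+d}$.

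Finally I would solve $\left(1-\frac{d}{n-r}\right)^{g}<\frac{r}{r+d}$. Taking logarithms and using $\ln\left(1-\frac{d}{n-r}\right)=-\ln\frac{n-r}{n-d-r}=-\ln\left(1+\frac{d}{n-d-r}\right)$ together with $\ln\frac{r}{r+d}=-\ln\left(1+\frac{d}{r}\right)$, and dividing by the negative quantity $\ln\left(1-\frac{d}{n-r}\right)$ (which flips the inequality), the condition rearranges to exactly $g>\frac{\ln\left(1+\frac{d}{r}\right)}{\ln\left(1+\frac{d}{n-d-r}\right)}=g_1$. Since the bound $\left(1-\frac{d}{n-r}\right)^{g}$ is itself monotonically decreasing in $g$, once it falls below $\frac{r}{r+d}$ it remains below, so $p_Y(g+1)<p_Y(g)$ for every $g>g_1$, which is the claim.

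I expect the only delicate point to be the factorwise upper bound on $\frac{B(g)}{A(g)}$: I must confirm that all factors stay positive over $g\le n-d-r$ so that replacing each by the maximal factor genuinely increases the product, and then verify that this seemingly loose bound is still tight enough to recover the stated threshold $g_1$ rather than a weaker one. The algebraic identity linking $\ln\left(1-\frac{d}{n-r}\right)$ to $\ln\left(1+\frac{d}{n-d-r}\right)$ is precisely what makes the estimate reproduce $g_1$ exactly, so checking it is the crux that ties the argument to the claimed expression.
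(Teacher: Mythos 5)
Your proof is correct and follows essentially the same route as the paper: the paper also reduces $p_Y(g)>p_Y(g+1)$ to the product condition $\prod_{i=0}^{g-1}\bigl(1+\tfrac{d}{n-d-r-i}\bigr)>1+\tfrac{d}{r}$ and then replaces every factor by the extremal $i=0$ factor, which is exactly the reciprocal of your bound $\frac{B(g)}{A(g)}\le\bigl(1-\tfrac{d}{n-r}\bigr)^{g}$, yielding the same threshold $g_1$. The only difference is cosmetic (you work with $B/A$ and an explicit forward-difference identity rather than the paper's chain of equivalences on $A/B$), so no further comparison is needed.
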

\begin{proof}
It is shown below that  $p_Y(g) > p_Y(g+1)$, for $g > g_1$.
\begin{align} \nonumber
&~\hspace{0.5cm}p_Y(g) > p_Y(g+1) \\
\nonumber&\Leftrightarrow \prod_{i=0}^{g-1} \left(1-\frac{r}{n-i}\right) \left(1-\left(1-\frac{r}{n-g}\right)\right) \\ \nonumber & \hspace{2cm} > \prod_{i=0}^{g-1} \left(1-\frac{r+d}{n-i}\right) \left(1-\left(1-\frac{r+d}{n-g}\right)\right)\\
\label{eqn-g1}
&\Leftrightarrow \prod_{i=0}^{g-1} \left(1+\frac{d}{n-d-r-i}\right) > 1+\frac{d}{r}\\
\nonumber
&\Leftarrow \left(1+\frac{d}{n-d-r}\right)^g >  1+\frac{d}{r}\\
\label{eqn-g1_approx}
& \Leftrightarrow g > g_1 = \frac{\ln \left(1+\frac{d}{r}\right)}{ \ln \left(1+\frac{d}{n-d-r}\right)} \approx \frac{\frac{d}{r}}{\frac{d}{n-d-r}} \approx \frac{n}{r},
\end{align}where the approximations follow from the fact that $d=o(r)$ and $r=o(n)$.
\end{proof} 

On account of the approximation (\ref{eqn-g1_approx}), for sufficiently large $n$, Lemma \ref{lem_dec_g1} implies that $p_Y(g)$ is a decreasing function of $g$ for $g \geq \lceil \frac{n}{r} \rceil$ if $\frac{n}{r}$ is not an integer, and for $g \geq \frac{n}{r}+1$ if $\frac{n}{r}$ is an integer.
%

\begin{lemma} \label{lem_dec_g0}
The probability $p_Y(g)$ is an increasing function of $g$ for $g < g_0$, where
 $g_0=\frac{d+\left(n-d-r+2\right)\ln \left(1+\frac{d}{r}\right)}{ d+ \ln \left(1+\frac{d}{r}\right)}$.
\end{lemma}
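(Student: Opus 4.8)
The plan is to mirror the proof of Lemma~\ref{lem_dec_g1}, but now tracking the reverse inequality $p_Y(g) < p_Y(g+1)$. First I would repeat the telescoping computation that led to (\ref{eqn-g1}): starting from $p_Y(g)=\prod_{i=0}^{g-1}\left(1-\frac{r}{n-i}\right)-\prod_{i=0}^{g-1}\left(1-\frac{r+d}{n-i}\right)$ and forming the difference $p_Y(g+1)-p_Y(g)$, the only new contributions come from the factors at index $g$, and after factoring out the common positive $\frac{1}{n-g}$ one finds that $p_Y(g)<p_Y(g+1)$ is \emph{exactly} equivalent to
\begin{align*}
\prod_{i=0}^{g-1}\left(1+\frac{d}{n-d-r-i}\right) < 1+\frac{d}{r}.
\end{align*}
This is precisely the inequality of (\ref{eqn-g1}) with the direction reversed. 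Since each factor on the left exceeds $1$, the product is increasing in $g$, so it suffices to locate the largest $g$ at which it still lies below $1+\frac{d}{r}$; monotonicity then delivers the claim for all smaller $g$.

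Whereas Lemma~\ref{lem_dec_g1} lower bounds the product by its \emph{smallest} factor (the $i=0$ term) in order to push it \emph{above} the threshold, here I would instead upper bound it by its \emph{largest} factor (the $i=g-1$ term), obtaining
\begin{align*}
\prod_{i=0}^{g-1}\left(1+\frac{d}{n-d-r-i}\right) \le \left(1+\frac{d}{n-d-r-g+1}\right)^{g}.
\end{align*}
Taking logarithms and applying $\ln(1+x)\le x$ reduces the sufficient condition $\left(1+\frac{d}{n-d-r-g+1}\right)^{g} \le 1+\frac{d}{r}$ to the linear-in-$g$ inequality $\frac{gd}{n-d-r-g+1}\le \ln\!\left(1+\frac{d}{r}\right)$. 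Solving this for $g$ is routine and yields the root $\frac{(n-d-r+1)\ln(1+d/r)}{d+\ln(1+d/r)}$.

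The remaining bookkeeping reconciles this root with the stated $g_0$. Since the pool size $g$ is an integer, ``$g \le \frac{(n-d-r+1)\ln(1+d/r)}{d+\ln(1+d/r)}$'' coincides with ``$g < g_0$'', because $g_0$ is exactly that root increased by one: $g_0 = \frac{(n-d-r+1)\ln(1+d/r)}{d+\ln(1+d/r)}+1 = \frac{d+(n-d-r+2)\ln(1+d/r)}{d+\ln(1+d/r)}$. As a consistency check against Lemma~\ref{lem_dec_g1}, under the standing assumptions $d=o(r)$ and $r=o(n)$ one has $\ln(1+\frac{d}{r})\approx\frac{d}{r}$, whence $g_0\approx\frac{n}{r}$, matching the location of $g_1$ and confirming that the optimum pool size is pinned near $\frac{n}{r}$.

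I expect the main obstacle to be the second step rather than the equivalence, which is a verbatim mirror of Lemma~\ref{lem_dec_g1}. The upper bound by the largest factor combined with $\ln(1+x)\le x$ must be simultaneously loose enough to produce a closed-form rational threshold and tight enough that $g_0$ still converges to $\frac{n}{r}$ (so that the window $(g_0,g_1)$ between the two lemmas does not become vacuous); tracking the off-by-one between the raw root and $g_0$ so that the statement reads cleanly as ``increasing for $g<g_0$'' is exactly where the care lies.
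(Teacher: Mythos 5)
Your proof is correct and follows essentially the same route as the paper's: the same telescoping equivalence reducing monotonicity to $\prod_{i}\left(1+\frac{d}{n-d-r-i}\right) \lessgtr 1+\frac{d}{r}$, the same upper bound by the largest factor, and the same use of $\ln(1+x)\le x$ to obtain a linear inequality in $g$ whose root gives $g_0$. The only difference is cosmetic: the paper works with $p_Y(g-1)<p_Y(g)$ so that the root of its linear inequality is exactly $g_0$ with no off-by-one adjustment, whereas your shift from $p_Y(g)<p_Y(g+1)$ lands you at $g_0-1$ and requires the (correct) reconciliation $g\le g_0-1 \Rightarrow$ the claim for all $g<g_0$.
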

\begin{proof}
It is shown below that  $p_Y(g-1) < p_Y(g)$, for $g < g_0$. Following similar steps as in obtaining (\ref{eqn-g1}), we have
\begin{align*}
&p_Y(g-1) < p_Y(g) \Leftrightarrow \prod_{i=0}^{g-2} \left(1+\frac{d}{n-d-r-i}\right) < 1+\frac{d}{r}\\
\nonumber
&\Leftarrow \left(1+\frac{d}{n-d-r-g+2}\right)^{g-1} < 1+\frac{d}{r}\\
& \Leftarrow e^{\frac{(g-1)d}{n-d-r-g+2}} < 1+\frac{d}{r}\\
& \Leftrightarrow g < g_0=\frac{d+\left(n-d-r+2\right)\ln \left(1+\frac{d}{r}\right)}{ d+ \ln \left(1+\frac{d}{r}\right)} \approx \frac{n}{r}
\end{align*}
\end{proof}

Hence, for sufficiently large $n$, Lemma \ref{lem_dec_g0} implies that $p_Y(g)$ is an increasing function of $g$ for $g \leq \lfloor \frac{n}{r} \rfloor$ if $\frac{n}{r}$ is not an integer, and for $g \leq \frac{n}{r}-1 $ if $\frac{n}{r}$ is an integer. Hence, for sufficiently large $n$, the optimum pool size is given by

$$
g_{opt}\in \left\{
\begin{array}{ll}
\left[\lfloor\frac{n}{r}\rfloor,\lceil\frac{n}{r}\rceil\right]_{\mathbb Z},& ~\text{if $\frac{n}{r}$ is not an integer} \\
\left[\frac{n}{r}-1,\frac{n}{r}+1\right]_{\mathbb Z}, & ~\text{otherwise.}
\end{array}
\right.
$$In other words, for $-1\leq \alpha \leq 1$, we have
\begin{align}\label{eqn-g_opt}
g_{opt}=\frac{n}{r} + \alpha=\frac{n}{r}\left(1+\alpha \frac{r}{n}\right). 
\end{align}
We now prove the following asymptotic lower bound.

\begin{figure*}
\begin{align}\label{eqn-thm_LB_1}
p_Y(g_{opt}) &\approx \frac{(n-r-g_{opt})^{-(n-r-g_{opt}+\frac{1}{2})}(n-r)^{n-r+\frac{1}{2}}-(n-d-r-g_{opt})^{-(n-d-r-g_{opt}+\frac{1}{2})}(n-d-r)^{n-d-r+\frac{1}{2}}}{(n-g_{opt})^{-(n-g_{opt}+\frac{1}{2})}n^{n+\frac{1}{2}}}\\
\nonumber
&=\left(1-\frac{g_{opt}}{n}\right)^r \left(1-\frac{r}{n-g_{opt}}\right)^{-\left(n-r-g_{opt}+\frac{1}{2}\right)}\left(1-\frac{r}{n}\right)^{n-r+\frac{1}{2}}\\
\nonumber
&\hspace{3cm}- \left(1-\frac{g_{opt}}{n}\right)^{r+d} \left(1-\frac{r+d}{n-g_{opt}}\right)^{-\left(n-r-d-g_{opt}+\frac{1}{2}\right)}\left(1-\frac{r+d}{n}\right)^{n-r-d+\frac{1}{2}}\\
\nonumber
&\approx e^{-\frac{rg_{opt}}{n}}e^{r-\frac{r^2}{n-g_{opt}}+\frac{r}{2(n-g_{opt})}}e^{-r+\frac{r^2}{n}-\frac{r}{2n}}-e^{-\frac{(r+d)g_{opt}}{n}}e^{(r+d)-\frac{(r+d)^2}{n-g_{opt}}+\frac{(r+d)}{2(n-g_{opt})}}e^{-(r+d)+\frac{(r+d)^2}{n}-\frac{(r+d)}{2n}}\\
\nonumber
&=e^{-\frac{rg_{opt}}{n}}e^{-\frac{r^2g_{opt}}{n(n-g_{opt})}}e^{\frac{rg_{opt}}{2n(n-g_{opt})}}-e^{-\frac{(r+d)g_{opt}}{n}}e^{-\frac{(r+d)^2g_{opt}}{n(n-g_{opt})}}e^{\frac{(r+d)g_{opt}}{2n(n-g_{opt})}}\\
\nonumber
&=e^{-(1+\alpha \frac{r}{n})}e^{-\frac{r(1+\alpha \frac{r}{n})}{n(1-\frac{g_{opt}}{n})}}e^{\frac{(1+\alpha \frac{r}{n})}{2n(1-\frac{g_{opt}}{n})}}\left(1-e^{-\frac{d}{r}(1+\alpha \frac{r}{n})}e^{-\frac{r\left(\frac{d^2}{r^2}+\frac{2d}{r}\right)\left(1+\alpha \frac{r}{n}\right)}{n(1-\frac{g_{opt}}{n})}}e^{\frac{d(1+\alpha \frac{r}{n})}{2nr(1-\frac{g_{opt}}{n})}}\right)\\
\nonumber
&\approx e^{-(1+\alpha \frac{r}{n})}e^{-\frac{r(1+\alpha \frac{r}{n})}{n(1-\frac{g_{opt}}{n})}}e^{\frac{(1+\alpha \frac{r}{n})}{2n(1-\frac{g_{opt}}{n})}}\left(\frac{d(1+\alpha \frac{r}{n})}{r}+\frac{\left(\frac{d^2}{r}+{2d}\right)\left(1+\alpha \frac{r}{n}\right)}{n(1-\frac{g_{opt}}{n})}-\frac{d(1+\alpha \frac{r}{n})}{2nr(1-\frac{g_{opt}}{n})}\right)\\
\label{eqn-py_approx_final}
&\approx\frac{d}{re}.
\end{align}
\hrule
\end{figure*}
\begin{theorem} \label{thm_LB}
An asymptotic lower bound on the number of tests required for non-adaptive pooling designs for SCP is given by $\Omega\left(\frac{r^2}{d \log \frac{r}{d}}\log n\right)$, in the $d=o(r), r=o(n)$ regime.
\end{theorem}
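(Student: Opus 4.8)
The plan is to instantiate the generic information-theoretic bound (\ref{eqn-LB}) using the two quantities already pinned down earlier in this section: the optimum pool size $g_{opt} \approx \frac{n}{r}$ established in Lemmas \ref{lem_dec_g1} and \ref{lem_dec_g0}, and the corresponding maximum positive-outcome probability $p_{Y_{max}} \approx \frac{d}{re}$ derived in (\ref{eqn-py_approx_final}). Since the outcome $Y$ is binary, the numerator and denominator of (\ref{eqn-LB}) can be estimated independently, and the result follows by a single division.

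First I would estimate the numerator. Because $P_e \to 0$ by assumption, $(1-P_e) \to 1$ and $H_2(P_e) \to 0$, so the numerator is asymptotically $\log {n \choose d}{n-d \choose r}$. Using the standard bounds $(m/k)^k \leq {m \choose k} \leq (em/k)^k$, one obtains $\log {n \choose d} = \Theta\!\left(d \log \tfrac{n}{d}\right)$ and $\log {n-d \choose r} = \Theta\!\left(r \log \tfrac{n}{r}\right)$. Since $d = o(r)$ and $r = o(n)$, the inhibitor term dominates, so the numerator is $\Theta\!\left(r \log \tfrac{n}{r}\right)$, which is $\Omega(r \log n)$ in this regime.

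Next I would estimate the denominator $\max_g H(Y)$. As $Y$ is a binary outcome, $H(Y) = H_2\!\left(p_Y(g)\right)$. Lemmas \ref{lem_dec_g1} and \ref{lem_dec_g0} show that $p_Y(g)$ increases up to $g \approx \frac{n}{r}$ and decreases thereafter, so $\max_g p_Y(g) = p_Y(g_{opt}) = p_{Y_{max}} \approx \frac{d}{re}$, which lies below $\frac{1}{2}$ for large $n$. Because $H_2$ is increasing on $[0, \tfrac{1}{2}]$ and every $p_Y(g)$ lies in this range (each value is at most the maximum $p_{Y_{max}} < \tfrac12$), the outcome entropy is maximized exactly where $p_Y$ is, giving $\max_g H(Y) = H_2(p_{Y_{max}})$. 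Invoking $H_2(x) \approx -x \log x = x \log \tfrac{1}{x}$ for small $x$ then yields $\max_g H(Y) = \Theta\!\left(\tfrac{d}{r} \log \tfrac{r}{d}\right)$, where I use $d = o(r)$ so that $\log \tfrac{re}{d} = \Theta\!\left(\log \tfrac{r}{d}\right)$.

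Finally, dividing the numerator bound by the denominator bound gives
\begin{align*}
T \geq \frac{\Omega(r \log n)}{\Theta\!\left(\frac{d}{r} \log \frac{r}{d}\right)} = \Omega\!\left(\frac{r^2}{d \log \frac{r}{d}} \log n\right),
\end{align*}
which is the claimed bound. The main obstacle is the denominator estimate: it rests on the delicate Stirling-based evaluation leading to $p_{Y_{max}} \approx \frac{d}{re}$ in (\ref{eqn-py_approx_final}) together with the small-argument behaviour of the binary entropy, and it requires the monotonicity of $p_Y(g)$ (Lemmas \ref{lem_dec_g1} and \ref{lem_dec_g0}) to justify that the entropy is maximized precisely at $g_{opt}$ rather than at some boundary pool size. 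The numerator estimate, by contrast, is routine.
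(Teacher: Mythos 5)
Your proposal follows the paper's proof essentially step for step: the paper likewise substitutes the Stirling-based estimate $p_Y(g_{opt})\approx \frac{d}{re}$ (the computation you defer to, which is indeed the technical heart of the argument) into the binary entropy, lower-bounds the numerator via ${n \choose d}{n-d \choose r}\geq (\frac{n}{d})^d(\frac{n-d}{r})^r$, and divides to get $\Omega\left(\frac{r^2}{d\log\frac{r}{d}}\log n\right)$. The one small point you gloss over is that Lemmas~\ref{lem_dec_g1} and~\ref{lem_dec_g0} establish the unimodality of $p_Y(g)$ only on the range $g\leq n-d-r$, so to conclude that the entropy is maximized at $g_{opt}$ one must also rule out larger pools; the paper does this by noting $p_Y(g)\leq e^{-r}$ for $n-d-r<g\leq n-r$ and $p_Y(g)=0$ beyond.
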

\begin{proof}
Using the fact that $g_{opt}=o(n)$ and using Stirling's approximation for factorial functions in $p_Y(g_{opt})$, we have (\ref{eqn-thm_LB_1})-(\ref{eqn-py_approx_final}) (given at the top of the next page). For $n-d-r<g\leq n-r$, the positive outcome probability is given by
\begin{align*}
&p_Y(g)=\frac{{n-r \choose g}}{{n \choose g}}= \prod_{i=0}^{g-1} \left(1-\frac{r}{n-i}\right)\\
&\leq \left(1-\frac{r}{n}\right)^g \leq e^{-\frac{rg}{n}} \leq e^{-\frac{r(n-d-r)}{n}} \approx e^{-r}.
\end{align*}Also, $p_Y(g)=0$ for $g > n-r$. Thus, for all sufficiently large $n$, $p_Y(g_{opt})>p_Y(g)$, for all $g>n-d-r$.
 
Since $\underset{g}{\max}~H(Y)=-p_Y(g_{opt})\log p_Y(g_{opt}) -(1-p_Y(g_{opt}))\log (1-p_Y(g_{opt}))$, substituting (\ref{eqn-py_approx_final}) in (\ref{eqn-LB}) with $d=o(r)$ and noting that ${n \choose d}{n-d \choose r}\geq(\frac{n}{d})^d (\frac{n-d}{r})^r$, we have
\begin{align*}
 T=\Omega\left(\frac{r^2}{d\log{\frac{r}{d}}}\log n\right).
\end{align*}It is observed that the ratio notion of approximation used suffices because one is interested in order bound on the number of tests.
\end{proof}

Hence, the proposed pooling design and decoding algorithm for SCP exceeds the lower bound by a multiplicative factor of $\log \frac{r}{d}$ tests in the $d=o(r)$ regime. The following result for DCP follows from the proof of the above theorem.

\begin{corollary}\label{cor1}
An asymptotic lower bound on the number of tests required for non-adaptive pooling designs for DCP is given by $\Omega\left(\frac{r}{ \log \frac{r}{d}}\log n\right)$, in the $d=o(r), r=o(n)$ regime. 
\end{corollary}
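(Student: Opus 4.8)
The plan is to re-run the information-theoretic converse of Theorem~\ref{thm_LB} verbatim, changing only the cardinality of the hypothesis set that appears in the numerator of (\ref{eqn-LB}). For DCP the decoder is required to output the defective set alone, so the number of configurations it must distinguish is $\binom{n}{d}$ rather than the $\binom{n}{d}\binom{n-d}{r}$ pairs that SCP must resolve. Accordingly I would take as the starting point the Fano-type inequality $T \geq \frac{\log\binom{n}{d}(1-P_e)-H_2(P_e)}{\max_g H(Y)}$, obtained from the chain $I(X_{\cal D};\underline Y)\le H(\underline Y)\le \sum_i H(Y_i)\le T\max_g H(Y)$ together with Fano applied to the defective configuration $X_{\cal D}$, exactly the argument used for (\ref{eqn-LB}).

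The key observation is that the denominator is completely inherited from the theorem: the physical outcomes $Y_i$, the optimal pool size $g_{opt}\approx \frac{n}{r}$ identified in Lemma~\ref{lem_dec_g1} and Lemma~\ref{lem_dec_g0}, and the maximal positive-outcome probability $p_{Y_{max}}\approx \frac{d}{re}$ of (\ref{eqn-py_approx_final}) depend only on the numbers $r,d$ of inhibitors and defectives that are present, not on which items the decoder is asked to name. Hence $\max_g H(Y)=H_2(p_{Y_{max}})\approx \frac{d}{re}\log\frac{re}{d}=\Theta\left(\frac{d}{r}\log\frac{r}{d}\right)$ in the $d=o(r)$ regime, unchanged from the SCP analysis. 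For the numerator I would use $\binom{n}{d}\geq (n/d)^d$, so that with $P_e\to 0$ and $d=o(n)$ one gets $\log\binom{n}{d}(1-P_e)-H_2(P_e)=\Omega(d\log n)$. Dividing yields
\begin{align*}
T=\Omega\left(\frac{d\log n}{\frac{d}{r}\log\frac{r}{d}}\right)=\Omega\left(\frac{r}{\log\frac{r}{d}}\log n\right),
\end{align*}
which is the claimed bound.

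The step I expect to require the most care is justifying that the single-outcome entropy bound $I(X_{\cal D};\underline Y)\le T\max_g H(Y)$ is still legitimate once $X_{\cal D}$ ranges only over defective configurations while the $r$ inhibitors remain part of the model. Here I would be explicit that the inequality $I(X_{\cal D};\underline Y)\le H(\underline Y)\le\sum_i H(Y_i)$ holds regardless of how the inhibitor positions are modelled (fixed or random), and that restricting the recovery target to the defectives cannot enlarge $\max_g H(Y)$, since the maximization over pool sizes carried out in Lemma~\ref{lem_dec_g1} and Lemma~\ref{lem_dec_g0} is driven by $p_Y(g)$, a quantity fixed by $r$ and $d$. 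The remaining arithmetic, namely approximating $H_2(p_{Y_{max}})$ for the small probability $p_{Y_{max}}\approx\frac{d}{re}$ and bounding $\log\binom{n}{d}$, is routine and, as in Theorem~\ref{thm_LB}, the ratio notion of approximation suffices because only the order of $T$ is sought.
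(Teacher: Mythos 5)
Your proposal is correct and follows essentially the same route as the paper: replace the SCP counting term $\binom{n}{d}\binom{n-d}{r}$ in the numerator of (\ref{eqn-LB}) by $\binom{n}{d}=\Omega(d\log n)$ bits, reuse $\max_g H(Y)=\Theta\bigl(\frac{d}{r}\log\frac{r}{d}\bigr)$ verbatim from the proof of Theorem~\ref{thm_LB}, and divide. The extra care you take in justifying that the denominator is unchanged when the recovery target shrinks to the defective set is sound (and slightly more explicit than the paper, which simply asserts it).
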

\begin{proof}
 For DCP, the combinatorial term in the numerator in (\ref{eqn-LB}) is given by ${n \choose d}$ instead of ${n \choose d}{n-d \choose r}$. Since for DCP, only the numerator of (\ref{eqn-LB}) changes w.r.t. SCP, using $\underset{g}{\max} ~H(Y)$ derived in the proof of Theorem \ref{thm_LB} we have the lower bound in the $d=o(r)$ regime to be
 \begin{align*}
  \Omega\left(\frac{d}{\frac{d}{r}\log\frac{r}{d}}\log n\right)=\Omega\left(\frac{r }{\log\frac{r}{d}}\log n\right).
 \end{align*}
\end{proof}

As noted in Remark \ref{rem1}, the proposed non-adaptive pooling design requires $O\left((r+d)\log n\right)$ tests for DCP. Thus, in the $r=O(d)$ regime, the proposed pooling design is order optimal in the number of tests whereas in the $d=o(r)$ regime, it exceeds the lower bound by a $\log \frac{r}{d}$ multiplicative factor.

In the next section, we extend the proposed non-adaptive pooling design and decoding algorithm to the case where only upper bounds on the number of defectives and inhibitors are given. We also exploit the lower bound obtained in this section to obtain a lower bound for the problem discussed in the next section.

\section{GTI with Knowledge of Upper Bounds on Number of Defectives and Inhibitors}\label{sec6}

In this section, it is assumed that only upper bounds on the number of defectives $D$ and the number of inhibitors $R$ are known, with $R,D=o(n)$. It is also assumed that at least one defective is present in the population. Otherwise, there is no way the inhibitors and the normal items can be distinguished. The goal here is to identify all the inhibitors and the defectives with vanishing error probability for any $(r,d)$ inhibitor-defective combination, where $r$ and $d$ denote the actual number of inhibitors and defectives. In other words, if $\underline{X}$ denotes the actual $n \times 1$ input vector that denotes the type of each item and $\underline{\hat{X}}$ denotes the estimated $n \times 1$ input vector, the challenge is to propose a non-adaptive pooling design and decoding algorithm so that 
\begin{align} \label{eqn-ob_UB}
\underset{r\in[0,R],d\in[1,D]}{\max}\Pr\{\underline{\hat{X}}\neq {\underline X}\} \leq cn^{-\delta}, 
\end{align}for some constant $c$ and $\delta>0$. For the lower bound, the random variables $r$ and $d$ are assumed to be uniformly distributed over the intervals $[0,R]$ and $[1,D]$ respectively. It is also assumed that $R\underset{n\rightarrow \infty}{\longrightarrow} \infty$.

For this set-up, we modify the non-adaptive pooling design and decoding algorithm proposed in Section \ref{sec4} for SCP and also utilize the lower bound derived in the previous section to obtain a lower bound for this scenario. 

\subsection{Modified Non-Adaptive Pooling Design and Decoding Algorithm}

To solve the SCP for this GTI scenario, we make use of two i.i.d. pooling designs, a $T_1 \times n$ test matrix $M_1$ chosen according to i.i.d. ${\cal B}(p_1)$ and a $T_2 \times n$ test matrix $M_2$ chosen according to i.i.d. ${\cal B}(p_2)$. The outcomes corresponding to the test matrix $M_1$ (denoted by $\underline{Y_1}$) are used to identify the defectives and the outcomes corresponding to the test matrix $M_2$ (denoted by $\underline{Y_2}$) are used to identify the normal items and inhibitors. So, here the effective test matrix is given by $M=[M^T_1 M^T_2]^T$.

For values of the i.i.d. parameters $p_1$ and $p_2$ to be specified later, the decoding algorithm for declaring the defectives, inhibitors and the normal items  is specified below as well as represented in Fig. \ref{fig-threshold_UB}.

\begin{enumerate}
 \item {\em Stage $1$}: Considering the outcome vector $\underline{Y_1}$, if $ |{\cal S}_j (\underline{Y_1})| > \lfloor|{\cal T}_j(\underline{Y_1})|[1-q_R(1+\tau))]\rfloor$, declare item-$j$ to be a defective.
 \item {\em Stage $2$}: For $j \notin \hat{{\cal D}}$ and considering the outcome vector $\underline{Y_2}$,
 \begin{itemize}
  \item if $|{\cal S}_j(\underline{Y_2})| =0$, declare item-$j$ to be an inhibitor.
  \item if $|{\cal S}_j(\underline{Y_2})| \geq 1$, declare item-$j$ to be a normal item.
 \end{itemize}
\end{enumerate} 
\begin{figure}[htbp] 
\vspace{-1cm} 
\includegraphics[totalheight=11cm,width=10cm]{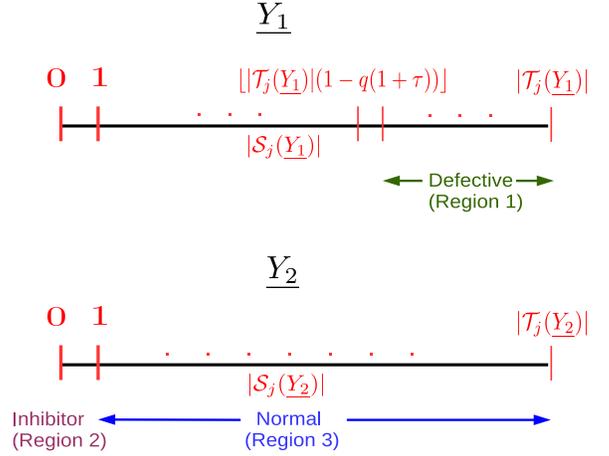}
 \vspace{-3cm} 
\caption{Thresholds for item classification from the outcomes $\underline{Y_1}$ and $\underline{Y_2}$. The second stage of the decoding algorithm that identifies the inhibitors and the normal items from $\underline{Y_2}$ involves only items $j \notin \hat{\cal D}$.}
\label{fig-threshold_UB}
\end{figure}The notation $\hat{{\cal D}}$ denotes the set of items declared to be defectives in Stage $1$. The set of tests corresponding to the test matrix $M_k$, for $k=1,2$, in which an item-$j$ participates is denoted by ${\cal T}_j(Y_k)$ and the set of positive outcome tests corresponding to the test matrix $M_k$ in which an item-$j$ participates by ${\cal S}_j(Y_k)$, for $j=1,2,\cdots,n$. The term $q_R$ denotes the worst case probability of a negative outcome given that a defective is present in a test, i.e., $q_R=1-(1-p_1)^R$. We account for the worst case in order to achieve vanishing error probability for the worst case combination of number of inhibitors and defectives as given in (\ref{eqn-ob_UB}). However, this over-compensation for classifying the defectives might significantly affect the probability of the error event of declaring normal items to be defectives. But it is shown in the following error-analysis that there is no penalty paid for this over-compensation. Note that here, unlike in the pooling design proposed in Section \ref{sec4}, the entries of the effective test matrix $M$ are independent but not identically distributed. The reason why a single i.i.d. test matrix $M$ would be sub-optimal in terms of number of tests will be clear from the error analysis.

\subsubsection{Error Analysis}
As in Section \ref{subsec1}, we enumerate the error events for the proposed algorithm and then find the number of tests required to guarantee that each of these error events vanish with $n$. Three possible error events can occur as given below.
\begin{enumerate}
\item  A defective is not identified as a defective in Stage-$1$ of the decoding algorithm, i.e., the defective does not fall under region-$1$ in Fig. \ref{fig-threshold_UB}. 
\item A normal item might be wrongly identified as a defective in Stage-$1$ of the decoding algorithm, i.e., the normal item falls under region-$1$ in  Fig. \ref{fig-threshold_UB}.
\item A normal item might be wrongly identified as an inhibitor in Stage-$2$ of the decoding algorithm, i.e., the normal item falls under region-$2$ in  Fig. \ref{fig-threshold_UB}.
\end{enumerate}
The other error events include non-participation of items in any of the tests. As before, an inhibitor is never identified as a normal item or a defective. Denoting the set of defectives, normal items, and inhibitors by ${\cal D}, {\cal N}$, and ${\cal I}$ respectively and using the same definition for the error events ${\cal E}^{(j)}_i$ as in Section \ref{subsec1}, their probabilities, for $(r,d)$ being the actual number of inhibitors and defectives, are evaluated as follows.

With $T_1=\beta_1 \log n$ and $q=1-(1-p_1)^r$, we have

{\small\begin{align}
\nonumber
&\text{Pr}\left\{\underset{j\in {\cal D}}{\bigcup} {\cal E}^{(j)}_1\right\} \\\nonumber &\leq d\text{ Pr} \left\{ |{\cal S}_j(\underline{Y_1})| \leq |{\cal T}_j(\underline{Y_1})|[1-q_R(1+\tau))]|j \in {\cal D}\right\}\\
\nonumber
&= d \sum_{t=0}^{T_1} {T_1 \choose t} p_1^t (1-p_1)^{T_1-t} \sum_{v=tq_R(1+\tau)}^{t} {t \choose v} q^v (1-q)^{t-v}\\ 
\nonumber
&= d \sum_{t=0}^{T_1} {T_1 \choose t} p_1^t (1-p_1)^{T_1-t} \hspace{-0.4cm} \sum_{v=qt+t(q_R-q+q_R\tau)}^{t} {t \choose v} q^v (1-q)^{t-v}\\
\nonumber
&\leq n^{-\delta}\\
\label{eqn-beta1_UB_0}
&\Leftarrow \beta_1 \geq \frac{\left(\frac{\ln D}{\ln n}+\delta\right)\ln 2}{p_1(1-e^{-2})(q_R-q+q_R\tau)^2}\\
\label{eqn-beta1_UB}
&\Leftarrow \beta_1 \geq \frac{\left(\frac{\ln D}{\ln n}+\delta\right)\ln 2}{p_1(1-e^{-2})(q_R\tau)^2},
\end{align}}where (\ref{eqn-beta1_UB_0}) is obtained following similar steps used in obtaining (\ref{eqn-beta1}), and $q\leq q_R$ ensures that Chernoff-Hoeffding bound is applicable as well as ensures sufficiency of (\ref{eqn-beta1_UB}) to guarantee that (\ref{eqn-beta1_UB_0}) holds for all $r\in[0,R]$. An appropriate choice of $\tau>0$\footnote{As shall be seen later, the chosen value of $\tau$ will also satisfy $q_R(1+\tau)\leq 1$ so that Chernoff-Hoeffding bound gives a non-trivial upper bound. Otherwise, the probability of the event under consideration will be equal to zero.} shall be specified later.

Similarly, to ensure that $\text{Pr}\left\{\underset{j\in {\cal N}}{\bigcup} {\cal E}^{(j)}_2\right\} \leq n^{-\delta}$, from $(\ref{eqn-beta2})$, we have
\begin{align}\label{eqn_beta1_UB2_0}
&\beta_1 \geq \frac{\left(\frac{\ln n-d-r}{\ln n}+\delta\right)\ln 2}{p_1(1-e^{-2})\left(1-q_R(1+\tau)-a\right)^2},
\end{align} where $a$ is defined in (\ref{eqn-define-a}) with $p=p_1$. The term $1-q_R(1+\tau)-a$ is lower bounded as
\begin{align}
\nonumber
&1-q_R(1+\tau)-a =1-q_R-a-q_R\tau \\
\nonumber
&\underset{(a)}{\geq} 1-(1-(1-p_1)^R)-(1-(1-p_1)^D)-q_R\tau\\
\nonumber
&\geq 1-(R+D)p_1-q_R\tau,
\end{align}where the lower bound $(a)$ follows from the fact that $a \leq 1-(1-p_1)^D$. So, to guarantee (\ref{eqn_beta1_UB2_0}) it is sufficient that 
\begin{align} \label{eqn-eqn_beta1_UB2}
&\beta_1 \geq \frac{\left(1+\delta\right)\ln 2}{p_1(1-e^{-2})\left(1-(R+D)p_1-q_R\tau\right)^2},
\end{align}Optimizing the denominators of the above inequality and (\ref{eqn-beta1_UB}) with respect to $\tau$ (by equating the denominators), we have 
\begin{align}\label{eqn-tau2}
 \tau = \frac{1-(R+D)p_1}{2q_R}.
\end{align}Note that this value of $\tau$ is independent of $r$ and $d$ so that the decoding algorithm is also independent of $r$ and $d$. To ensure that $\tau>0$, we must have $0<p_1 <\frac{1}{R+D}$. For this range of $p_1$, optimizing
\begin{align*}
 \frac{1}{p_1(q_R\tau)^2}=\frac{1}{p_1\left(1-(R+D)p_1-q_R\tau\right)^2}
\end{align*} w.r.t. $p_1$ with $\tau$ chosen as in (\ref{eqn-tau2}) yields
\begin{align*}
p_1=\frac{1}{3(R+D)}.
\end{align*}Substituting the chosen values of $p_1$ and $\tau$ in (\ref{eqn-beta1_UB}) and (\ref{eqn-eqn_beta1_UB2}), we have

{\small\begin{align} \label{eqn-beta1_UB_final}
\beta_1 \geq \max \left\{\frac{27(R+D)\left(\frac{\ln D}{\ln n}+\delta\right)\ln 2}{(1-e^{-2})},\frac{27(R+D)\left(1+\delta\right)\ln 2}{(1-e^{-2})}\right\}.
\end{align}}

To satisfy $\text{Pr}\left\{\underset{j\in {\cal N}}{\bigcup} {\cal E}^{(j)}_3\right\} \leq n^{-\delta}$ in the second stage of the decoding algorithm with $T_2=\beta_2 \log n$, we have (\ref{eqn-beta3}), with $p=p_2$. The inequality (\ref{eqn-beta3}) is satisfied for all $(r,d)$ if it is satisfied with the denominator of the RHS evaluated at $r=R,d=1$ and the numerator of the RHS evaluated at $r=d=0$, i.e., 
\begin{align}\label{eqn-beta2_UB_0}
 \beta_2 \geq \frac{(1+\delta)\ln 2}{p_2^2(1-p_2)^R} \geq \frac{(1+\delta)\ln 2}{p_2^2(1-Rp_2)}.
\end{align}Optimizing the above expression for $p_2<\frac{1}{R}$ yields $p_2= \frac{2}{3R}$. Therefore, from the above bound along with the bound for $\beta_2$ obtained from (\ref{eqn-beta4}) to satisfy $\text{Pr}\left\{\underset{j\in {\cal I}}{\bigcup} {\cal E}^{(j)}_4\right\}\leq n^{-\delta}$, we have
\begin{align} \label{eqn-beta2_UB}
 \beta_2 \geq \max \left\{\frac{27}{4}R^2(1+\delta)\ln 2, \frac{3R}{2}\left(\frac{\ln R}{\ln n}+\delta\right)\ln 2\right\}.
\end{align}

From (\ref{eqn-beta1_UB_final}) and (\ref{eqn-beta2_UB}), the scaling of the total number of tests $T=T_1+T_2$ for various regimes of $R$ and $D$ is evaluated as below.
\begin{enumerate}
 \item For $R^2=O(D)$, the number of tests scales as $T=O(D \log n)$.
 \item For $D=o(R^2)$, the number of tests scales as $T=O(R^2 \log n)$.
\end{enumerate}
 
\begin{remark}
For DCP under this set-up, the number of tests required is given by $T=(R+D)\log n$. Similar to the previous set-up (as noted in Remark $1$), here too only the error events ${\cal E}^{(j)}_{1}$ and ${\cal E}^{(j)}_{2}$ matter for DCP and hence, only the bound in (\ref{eqn-beta1_UB_final}) needs to be satisfied.
\end{remark}

\begin{remark}
 We note that if $M$ is chosen to be a single i.i.d. test matrix then, $p_2=p_1$. From (\ref{eqn-beta2_UB_0}), such a choice of $p_2$ would result in $O\left((R+D)^2 \log n\right)$ scaling in the number of tests which is clearly sub-optimal.
\end{remark}

In the next sub-section, we show that the required scaling in the number of tests is indeed close to the lower bound.

\subsection{Lower Bound}
Clearly, the number of tests required for the proposed pooling design is order optimal for both SCP and DCP in the $R^2=O(D)$ regime. The following proposition shows that the number of tests required for SCP in the proposed pooling design in the $D=o(R^2)$ regime exceeds the lower bound for non-adaptive pooling designs by at most a $\log r$ multiplicative factor. 

\begin{proposition}
An asymptotic lower bound on the number of tests for non-adaptive pooling designs for SCP in GTI with knowledge of only upper bounds on the number of inhibitors $R$ and defectives $D$ is given by $\max\left\{\Omega\left((R+D) \log n\right),\Omega\left(\frac{R^2}{\log R} \log n\right)\right\}$.
\end{proposition}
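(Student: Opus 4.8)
The plan is to prove the lower bound as the maximum of two separate bounds, each arising from a different reduction. The first bound, $\Omega((R+D)\log n)$, is the trivial counting bound: to identify the defectives and inhibitors among $n$ items, the outcome vector must distinguish all $\binom{n}{d}\binom{n-d}{r}$ possible configurations, and since each test produces one bit, an information-theoretic argument analogous to (\ref{eqn-LB}) immediately yields $\Omega((R+D)\log n)$ in the worst case (e.g. taking $r\approx R$, $d\approx D$). This part requires essentially no new work beyond invoking the Fano-type bound already set up in Section \ref{sec5}.

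The harder and more interesting bound is $\Omega\left(\frac{R^2}{\log R}\log n\right)$, and the natural strategy is a reduction to the exact-value lower bound proved in Theorem \ref{thm_LB}. The idea is that any non-adaptive pooling design that succeeds for \emph{all} $(r,d)$ with $r\in[0,R]$, $d\in[1,D]$ must in particular succeed for one specific well-chosen pair $(r^\ast,d^\ast)$ lying in the regime $d=o(r)$ where Theorem \ref{thm_LB} gives a strong bound. First I would fix $d^\ast$ to be a small constant (or slowly growing, with $d^\ast=o(r^\ast)$) and set $r^\ast$ to be of order $R$; then a scheme guaranteeing (\ref{eqn-ob_UB}) for the upper-bound problem also guarantees vanishing error for this single instance, so it inherits the lower bound $\Omega\left(\frac{(r^\ast)^2}{d^\ast\log(r^\ast/d^\ast)}\log n\right)$ from Theorem \ref{thm_LB}. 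Substituting $r^\ast\asymp R$ and $d^\ast$ constant turns $\frac{(r^\ast)^2}{d^\ast\log(r^\ast/d^\ast)}$ into $\Theta\left(\frac{R^2}{\log R}\right)$, yielding the claimed bound.

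The main obstacle I anticipate is justifying the reduction cleanly in the averaged-error setting. The excerpt specifies that for the lower bound $r$ and $d$ are \emph{uniformly distributed} over $[0,R]$ and $[1,D]$ and the error is averaged (see the footnote after (\ref{eqn-LB}) and the paragraph preceding this proposition), whereas Theorem \ref{thm_LB} is stated for a fixed pair with $P_e\to 0$. I would need to argue that a scheme with small \emph{average} error over the uniform prior cannot have large error on the target instance $(r^\ast,d^\ast)$ — or, more robustly, re-run the Fano argument (\ref{eqn-LB}) directly with the uniform prior but restricting attention to configurations with $r$ near its typical value $\Theta(R)$ and $d$ near its typical value $\Theta(D)$. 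Since $r$ is uniform on $[0,R]$, a constant fraction of the mass sits at $r=\Theta(R)$, so the conditional entropy of the configuration remains $\Theta(R\log n)$ worth of the combinatorial term, and the key quantity $\max_g H(Y)$ is still controlled by $p_{Y_{\max}}\approx d/r$ as computed in the proof of Theorem \ref{thm_LB}. The delicate point is that $H(Y)$ in the denominator depends on the realized $p_Y$, which in turn depends on $(r,d)$; I would handle this by bounding $\max_g H(Y)$ uniformly over the typical set using the monotonicity Lemmas \ref{lem_dec_g1} and \ref{lem_dec_g0}, which locate the optimal pool size near $n/r$ and give $p_{Y_{\max}}\approx d/r$ throughout that range. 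Combining the two reductions via $T\geq\max\{\cdot,\cdot\}$ then completes the proof.
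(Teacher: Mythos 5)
Your proposal is correct and matches the paper's proof essentially exactly: the first bound is the trivial counting/Fano bound, and the second is obtained by specializing Theorem \ref{thm_LB} to the single instance $r=R$, $d=1$, which any scheme satisfying the worst-case guarantee (\ref{eqn-ob_UB}) must handle. Your extra care about the averaged-error setting is reasonable but not needed here, since the paper's guarantee is stated as a maximum over $(r,d)$, so the reduction to a fixed instance is immediate.
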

\begin{proof}
Since vanishing error probability for the worst $(r,d)$  combination needs to be ensured as given in (\ref{eqn-ob_UB}), the first lower bound of $\Omega\left((R+D) \log n\right)$ is trivial. The second lower bound is obtained by evaluating the lower bound in Theorem \ref{thm_LB} at $r=R$ and $d=1$ which minimizes the probability of a positive outcome.
\end{proof}

The following result, similar to Corollary \ref{cor1} in the previous set-up, gives a lower bound for DCP under the current GTI set-up.
\begin{corollary}
An asymptotic lower bound on the number of tests for non-adaptive pooling designs for DCP in GTI with knowledge of only upper bounds on the number of inhibitors $R$ and defectives $D$ is given by $\max\left\{\Omega\left(D \log n\right),\Omega\left(\frac{R}{\log R} \log n\right)\right\}$.
\end{corollary}
\begin{proof}
 The first lower bound is trivial and the second one follows by evaluating the lower bound in Corollary \ref{cor1} at $r=R$ and $d=1$.
\end{proof}

Thus, for DCP in GTI with knowledge of only upper bounds on the number of inhibitors and defectives, the proposed pooling design is order optimal in the number of tests in the $R=O(D)$ regime and exceeds the lower bound by a $\log R$ multiplicative factor in the $D=o(R)$ regime.

\section{Conclusion}
Probabilistic non-adaptive pooling design was proposed for SCP (and, as a by-product, for DCP) in the GTI model and a column-matching like decoding algorithm was proposed on the lines of \cite{CJSA} for the following cases.
\begin{itemize}
 \item Exact number of inhibitors and defectives is known.
 \item Upper bounds on the number of inhibitors and defectives are known.
\end{itemize}
In the small inhibitor regime, the proposed pooling design is shown to be order optimal in the number of tests. In the large inhibitor regime, the number of tests required in the proposed pooling design is observed to exceed the lower bound by logarithmic multiplicative factors. Similar gaps between the upper and lower bounds on the number of tests exist even in zero-error SCP and DCP in the GTI model for both adaptive and non-adaptive pooling designs, as observed from Table \ref{tab1}. Also, as seen from Table \ref{tab1}, for $\epsilon$-error the number of tests required is much less compared to the number of tests required for zero-error pooling designs.

As noted in Section \ref{sec4}, inhibitors could be regarded as an asymmetric noise. Hence, a noisy channel coding approach of \cite{AAS} could tighten the lower bound in the large inhibitor regime.
Extensions to other known GTI models like the $k$-inhibitor model and the threshold GTI model with $\epsilon$-error targets are directions worth exploring.

\end{document}